\documentclass[a4paper, 12pt, numbers=noenddot]{scrartcl}

\usepackage[T1]{fontenc}
\usepackage[utf8]{inputenc}
\usepackage[english]{babel}
\usepackage{xcolor}

\usepackage[left=1.8cm, right=1.8cm, bottom=2.7cm, top=2.7cm]{geometry}
\usepackage[onehalfspacing]{setspace}
\usepackage{ragged2e}
\usepackage{enumitem}
\usepackage{float}
\usepackage[section]{placeins}
\usepackage{verbatim}
\usepackage[headsepline, footsepline]{scrlayer-scrpage}
\usepackage{footnote}
\usepackage[hang,flushmargin]{footmisc}
\usepackage{anyfontsize}

\usepackage{amsmath}
\usepackage{amsthm}
\usepackage{newtxtext,newtxmath}
\newtheorem{proposition}{Proposition}[section]
\theoremstyle{remark}
\newtheorem{remark}{Remark}[section]
\theoremstyle{plain}

\usepackage{array}
\usepackage{multirow}
\usepackage{longtable}
\usepackage{tabularx}
\usepackage{makecell}
\usepackage{siunitx}
\usepackage{booktabs}
\usepackage{comment}

\definecolor{arcaBlue}{HTML}{000000}
\definecolor{arcaRed}{HTML}{000000}
\definecolor{arcaWhite}{HTML}{FFFFFF}
\definecolor{arcaBlack}{HTML}{000000}

\usepackage{graphicx}
\usepackage[most]{tcolorbox}
\usepackage{tikz}
\usepackage{pgfplots}
\pgfplotsset{compat=1.18}

\usepackage{csquotes}
\usepackage[backend=biber, style=numeric, sorting=none]{biblatex}

\DefineBibliographyStrings{italian}{%
    bibliography={Bibliografia}%
}
\usepackage{xurl}
\usepackage[colorlinks=true, linkcolor=black, citecolor=black, urlcolor=black]{hyperref}
\newcommand{\Title}{Reverse Stress Testing for Multivariate Scenarios}
\newcommand{\Subtitle}{A Conditional Framework for Stressed Time Series}

\newcommand{\ReportDate}{March 2025}

\newcommand{\Keywords}{Reverse stress testing; Scenario simulation; Multivariate stress scenarios; Empirical likelihood; Semiparametric methods; Nonparametric resampling; Market risk.}
\newcommand{\JELCodes}{C14; C15; C53; C58; G17; G32.}

\newcommand{\TitlePageAuthors}{%
    \par\noindent
    \makebox[\textwidth][c]{%
        \begin{tabular}{@{}c@{\hspace{2.5cm}}c@{}}
            {\fontsize{13}{16}\selectfont Michele Sparviero} &
            {\fontsize{13}{16}\selectfont Lorenzo Viola} \\[0.35em]
            {\small ARCA Fondi SGR} &
            {\small ARCA Fondi SGR} \\[0.35em]
            {\small\texttt{michele.sparviero@arcafondi.it}} &
            {\small\texttt{lorenzo.viola@arcafondi.it}}
        \end{tabular}%
    }%
    \par
}

\setkomafont{disposition}{\normalfont\bfseries}

\setkomafont{section}{\normalfont\large\bfseries}

\setkomafont{subsection}{\normalfont\normalsize\bfseries}

\setkomafont{subsubsection}{\normalfont\normalsize\itshape}

\setkomafont{paragraph}{\normalfont\normalsize\itshape}
\setkomafont{subparagraph}{\normalfont\normalsize\itshape}

\RedeclareSectionCommand[
  beforeskip=2.5ex plus 1ex minus .2ex,
  afterskip=1.2ex plus .2ex
]{section}

\RedeclareSectionCommand[
  beforeskip=2ex plus .8ex minus .2ex,
  afterskip=0.8ex plus .2ex
]{subsection}

\RedeclareSectionCommand[
  beforeskip=1.5ex plus .6ex minus .2ex,
  afterskip=0.6ex plus .2ex
]{subsubsection}

\clearpairofpagestyles

\setkomafont{pageheadfoot}{\normalfont\small}
\setkomafont{pagenumber}{\normalfont\small}

\newtcolorbox{boxK}{
    colback=white,
    colframe=black,
    boxrule=0.4pt,
    arc=0pt,
    left=6pt,
    right=6pt,
    top=6pt,
    bottom=6pt,
    before skip=1em,
    after skip=1em,
    fontupper=\small
}

\begin{document}

\hypersetup{pageanchor=false}
% ============================================================
%  Frontespizio in stile accademico (working paper)
%  Titolo, autori e abstract su un'unica pagina.
%  Sfondo bianco, testo interamente nero, nessuna grafica.
% ============================================================
\begin{titlepage}
\thispagestyle{empty}
\begin{singlespace}
\centering

\vspace*{1.3cm}

% --- Titolo ---
{\fontsize{20}{25}\selectfont\bfseries \Title\par}

\vspace{0.6cm}

% --- Sottotitolo ---
{\fontsize{14}{18}\selectfont \Subtitle\par}

\vspace{1.3cm}

% --- Autori, affiliazione, contatti ---
\TitlePageAuthors

\vspace{0.85cm}

% --- Data / versione ---
{\normalsize \ReportDate\par}

\vspace{1.2cm}

% --- Abstract, parole chiave e classificazione JEL ---
\begin{minipage}{0.82\textwidth}
    \begin{center}
        {\large\bfseries Abstract}
    \end{center}
    \vspace{0.35em}

    {\normalsize\justifying\noindent
        This paper develops a methodological framework for reverse stress
testing (RST) in which a multivariate stress scenario, coherent with the empirical dependence structure of a market, is reconstructed from a single exogenous shock prescribed on one asset class. The problem is formulated as the maximisation of the conditional density given the imposed shock, and is solved under three progressively weaker distributional assumptions. In the parametric setting, joint Gaussianity of the returns yields a closed-form modal scenario coinciding with the conditional mean of the non-shocked components. In the semiparametric setting, the modal
scenario is estimated nonparametrically through the empirical
likelihood methodology and the surrounding stressed
trajectories are generated via a Gaussian or Student-\emph{t} local sampling scheme. In the fully nonparametric setting, stressed trajectories are obtained by inverse-distance resampling of the historical observations within a Mahalanobis neighbourhood of the estimated scenario. The three variants are validated on real market data. The simulated
scenarios prove to be economically coherent and capable of
reproducing the standard risk--reward asymmetry observed in stressed
market regimes.
\par}

    \vspace{1.0em}

    {\small\raggedright\noindent
        \textbf{Keywords:} \Keywords\par}

    \vspace{0.45em}

    {\small\raggedright\noindent
        \textbf{JEL classification:} \JELCodes\par}
\end{minipage}

\vfill
\end{singlespace}
\end{titlepage}

\hypersetup{pageanchor=true}
\pagenumbering{arabic}
\setcounter{page}{1}
\pagestyle{scrheadings}

\section{Introduction}
In financial applications, scenario simulation plays a central role in assessing capital allocation, pricing, hedging strategies, and broader strategic decisions under plausible yet unobserved future market conditions. Among the most widely used methodologies, Monte Carlo simulation relies on the parametric random generation of price trajectories and represents a fundamental tool for derivative valuation and risk measurement. Bootstrapping methods, by contrast, use nonparametric resampling techniques to project historical time series, thereby providing a flexible framework for simulating market dynamics without imposing a fully specified distributional structure.

Beyond the simulation of expected or baseline dynamics, financial institutions commonly introduce deviations from central scenarios in order to assess portfolio resilience under adverse conditions. This is typically achieved through \textit{stress testing}, where one or more risk factors are shocked and the resulting impact on aggregate quantities---such as liabilities, portfolio value, solvency ratios, or maximum drawdown---is evaluated. Standard examples include the assessment of the portfolio loss generated by a $300$ basis point increase in interest rates or by a $30\%$ decline in equity markets.

A key limitation of conventional stress testing is that shocks are often imposed on individual asset classes in isolation. This approach may fail to capture the propagation of stress across the broader financial system. For instance, a $30\%$ decline in US equities would be unlikely to occur without affecting government bond yields, credit spreads, exchange rates, volatility, liquidity conditions, and other risk factors. Ignoring these interactions may therefore lead to internally inconsistent scenarios and to an incomplete assessment of portfolio vulnerability.

The methodology proposed here addresses this limitation by constructing a multivariate stress scenario that is consistent with the historical dependence structure among the relevant risk factors. The approach can be described in two stages:
\begin{enumerate}
    \item \textit{Estimation of the multivariate stress scenario.}\\
    Starting from an adverse outcome of interest---such as a target portfolio loss, a maximum drawdown threshold, or a critical solvency condition---the method identifies the combination of market shocks that is most consistent with both the specified outcome and the historical relationships among the underlying risk factors.

    \item \textit{Application of the stress scenario.}\\
    The estimated multivariate shock vector is then applied to the historical series, portfolio exposures, valuation models, or liability structure under analysis. This allows the impact of a coherent and system-wide stress event to be assessed across the quantities of interest, including portfolio valuation, risk measures, liquidity needs, or balance-sheet indicators.
\end{enumerate}

The main contribution of this framework lies in the preliminary generation of a coherent multivariate scenario before the portfolio impact is evaluated. In a traditional stress test, the direction of the analysis is typically
\[
  \text{Scenario} \;\longrightarrow\; \text{Loss}.
\]
The scenario is specified exogenously, and the resulting loss is computed. By contrast, in a \textbf{Reverse Stress Testing} (RST) framework, the analysis starts from an adverse outcome and works backwards to identify the market configurations that could plausibly generate it:
\[
  \text{Loss} \;\longrightarrow\; \text{Multivariate Scenario}.
\]
This inversion of perspective is the defining feature of Reverse Stress Testing. Rather than asking what the loss would be under a pre-defined shock, the method asks which joint movements in risk factors could lead to a given level of loss or financial distress. As a result, the framework provides a more informative and internally consistent view of portfolio vulnerabilities, especially when risks arise from the interaction of multiple market factors rather than from a single isolated shock.

\section{Reverse Stress Testing}

Consider a market composed of \(d\) asset classes, and denote by
\[
\hat r_{i,t}\in\mathbb{R},
\qquad i\in\{1,\dots,d\},\quad t\in\{1,\dots,T\},
\]
the daily return of the \(i\)-th asset class at time \(t\), observed
over a historical sample of length \(T\). The vector of returns at
time \(t\) is collected in
\(\hat r_{t}=(\hat r_{1,t},\dots,\hat r_{d,t})^{\top}\in\mathbb{R}^{d}\),
and the corresponding random vector of daily returns is denoted by
\(r=(r_{1},\dots,r_{d})^{\top}\), with joint density \(f_{r}\).

Without loss of generality, we assume that the asset class subject to
the exogenous shock is the first one. Accordingly, we partition the
return vector as \(r=(r_{1},\,r_{-1})\), where
\(r_{-1}=(r_{2},\dots,r_{d})^{\top}\in\mathbb{R}^{d-1}\) collects the
returns of the non-shocked asset classes. Throughout this work, the
subscript \(-1\) will denote the sub-vector (or sub-matrix, when
applied to a covariance matrix) obtained by removing the entry
(resp.\ row and column) associated with the shocked component.

The RST procedure is parametrised by four inputs:
\begin{itemize}
    \item the index \(j\in\{1,\dots,d\}\) of the asset class subject to
    the shock; without loss of generality, \(j=1\);
    \item the cumulative shock \(S\in\mathbb{R}\) prescribed on the
    selected asset class over the entire stress window;
    \item the starting date \(T_{\text{start}}\in\{1,\dots,T\}\) of the
    stress window;
    \item the length \(T_{\text{win}}\in\mathbb{N}\) of the stress
    window, measured in trading days.
\end{itemize}
The stress window is therefore the discrete time interval
\(\{T_{\text{start}},\dots,T_{\text{start}}+T_{\text{win}}\}\), within
which the procedure replaces the original returns of the shocked
asset class with realisations consistent with the imposed shock. Since the cumulative shock \(S\) refers to the entire stress window,
while the procedure operates on daily returns, it is convenient to
introduce the \emph{daily-equivalent shock}
\[
s \;:=\; (1+S)^{1/T_{\text{win}}}-1.
\]
As discussed in
Section~\ref{ch:applications}, alternative conversions based on
longer blocks (e.g.\ weekly or monthly) are also admissible, provided
the temporal base is kept consistent across the entire analysis.

The conditional density of the non-shocked returns \(r_{-1}\) given the
shocked return \(r_{1}\) is defined as
\[
f_{r_{-1}\mid r_{1}}(x_{-1}\mid x_{1})
\;=\;
\frac{f_{r}(x_{1},x_{-1})}{f_{r_{1}}(x_{1})},
\qquad x=(x_{1},x_{-1})\in\mathbb{R}^{d},
\]
and describes the joint law of the non-shocked components conditional
on a prescribed value of the shocked one. Within the RST framework,
the conditioning event of interest is the imposed shock
\(\{r_{1}=s\}\), and the conditional density
\(f_{r_{-1}\mid r_{1}}(\,\cdot\mid s)\) collects all joint
configurations of the remaining asset classes consistent with such a
shock.

Among the uncountably many configurations compatible with the
conditioning, RST singles out the one that maximises the conditional
density, namely
\begin{equation}\label{opt}
\bar r^{\,s}_{-1}
\;=\;
\arg\max_{x_{-1}\in\mathbb{R}^{d-1}}\;
f_{r_{-1}\mid r_{1}}(x_{-1}\mid s),
\qquad
\bar r^{\,s}\;=\;\bigl(s,\;\bar r^{\,s}_{-1}\bigr).
\end{equation}

The remainder of the paper is devoted to the solution of
problem~\ref{opt} under progressively weaker distributional
assumptions, paired in each case with a coherent scheme for the
simulation of stressed trajectories.
Section~\ref{para} solves problem~\ref{opt} in closed form under the
assumption that the return vector \(r\) is jointly Gaussian,
recovering the standard conditional-distribution formulae and the
associated sampling procedure.
Section~\ref{rst} relaxes the Gaussian assumption and develops a
semiparametric framework in which the modal scenario
\(\bar r^{\,s}\) is estimated nonparametrically via the
\emph{empirical likelihood} methodology of Owen~\cite{owenn}, while
the local sampling around \(\bar r^{\,s}\) is performed under a
Gaussian or Student-\emph{t} model.
Section~\ref{nonpar-sampling} discusses a fully nonparametric
sampling scheme, based on an inverse-distance reweighting of the
historical observations within a Mahalanobis neighbourhood of
\(\bar r^{\,s}\).
The methodology is finally validated on real market data in
Section~\ref{ch:applications}, where the three variants are compared
on a common experimental setup.
\section{Parametric Approach}\label{para}

\noindent The treatment of problem~\ref{opt} simplifies considerably once
an explicit parametric form is assumed for the joint density \(f_r\).
In what follows, we adopt the working hypothesis that the return vector
\(r\in\mathbb{R}^{d}\) is jointly Gaussian, so that
\[
f_r(x)
=
\frac{1}{(2\pi)^{d/2}\,\lvert\Sigma\rvert^{1/2}}
\exp\!\left\{
-\frac{1}{2}(x-\mu)^\top\Sigma^{-1}(x-\mu)
\right\},
\]
where \(\mu\in\mathbb{R}^{d}\) is the mean vector and
\(\Sigma\in\mathbb{R}^{d\times d}\) a symmetric, positive-definite
covariance matrix. Standard estimators are employed: \(\mu\) is
estimated by the sample mean and \(\Sigma\) by the sample covariance
matrix of historical returns. We refer to these estimates as
\emph{global} when they are computed on the full historical sample of
the market under consideration, and as \emph{local} when they are
restricted to observations recorded under specific market conditions
(for instance, periods of distress). The detailed construction of both
estimators is deferred to Appendix~B.

In order to solve problem~\ref{opt}, an explicit expression for the
conditional density \(f_{r_{-1}\mid r_1}\) is required. We first fix
the notation that will be used throughout. Without loss of generality,
we assume that the exogenous shock acts on the first component of the
return vector, denoted by \(r_1\), and we collect the remaining
components in the \((d-1)\)-dimensional vector
\[
r_{-1}
\;:=\;
(r_2,\dots,r_d)^{\top}\in\mathbb{R}^{d-1},
\]
so that \(r=(r_1,\,r_{-1}^{\top})^{\top}\). Accordingly, the mean vector and the covariance
matrix admit the conformable block decomposition
\[
\mu
=
\begin{pmatrix}
\mu_1\\
\mu_{-1}
\end{pmatrix},
\qquad
\Sigma
=
\begin{pmatrix}
\Sigma_{11} & \Sigma_{1,-1}\\[4pt]
\Sigma_{-1,1} & \Sigma_{-1,-1}
\end{pmatrix},
\]
where \(\mu_1\in\mathbb{R}\) and \(\Sigma_{11}\in\mathbb{R}_{>0}\) are
the (scalar) global sample mean and variance of \(r_1\),
\(\mu_{-1}\in\mathbb{R}^{d-1}\) is the vector of global sample means of
the non-shocked components, and the off-diagonal blocks satisfy
\(\Sigma_{1,-1}=\Sigma_{-1,1}^{\top}\) by symmetry of~\(\Sigma\). This
block representation is well suited to the present problem: the first
component is imposed exogenously through the stress scenario, while the
remaining components must be determined coherently with the estimated
dependence structure.

Under the multivariate Gaussian assumption, the conditional
distribution of \(r_{-1}\) given the event \(\{r_1=s\}\) is itself
Gaussian, namely
\[
r_{-1}\mid (r_1=s)
\sim
\mathcal{N}_{d-1}\!\left(
\mu_{-1\mid 1},\;\Sigma_{-1\mid 1}
\right),
\]
with conditional mean and covariance
\[
\mu_{-1\mid 1}
=
\mu_{-1}
+
\Sigma_{-1,1}\Sigma_{11}^{-1}(s-\mu_1),
\qquad
\Sigma_{-1\mid 1}
=
\Sigma_{-1,-1}
-
\Sigma_{-1,1}\Sigma_{11}^{-1}\Sigma_{1,-1}.
\]
We note that \(\Sigma_{-1\mid 1}\) is the Schur complement of the block
\(\Sigma_{11}\) in \(\Sigma\). Since \(\Sigma\) is positive definite,
this complement is itself positive definite, so the conditional density
is non-degenerate.

Since the conditional density \(f_{r_{-1}\mid r_1}\) is Gaussian, it is
unimodal and attains its maximum at the conditional mean
\(\mu_{-1\mid 1}\). The solution of problem~\ref{opt} on the
non-shocked components is therefore \(\mu_{-1\mid 1}\), and the full
stressed configuration is recovered by reinserting the fixed shocked
coordinate \(r_1=s\), yielding
\[
\bar r^{\,s}
=
\begin{pmatrix}
s\\
\mu_{-1\mid 1}
\end{pmatrix}.
\]
The vector \(\bar r^{\,s}\) admits a natural interpretation as the
most likely joint configuration of the returns, conditional on the
imposed shock \(r_1=s\) and on the estimated Gaussian dependence
structure.

The realisations \(\hat r^{\,s}\) are coherent both with the imposed
shock and with the estimated dependence among the risk factors;
repeated sampling traces out an approximation of the conditional law
of \(r_{-1}\mid r_1=s\) around the modal scenario \(\bar r^{\,s}\).

In practical applications, the covariance blocks entering the
conditional moments may be estimated either globally or locally,
depending on the analytical objective.

We finally emphasise that, within this framework, the stress scenario
is imposed by conditioning on the event \(\{r_1=s\}\): the shocked
component is not generated stochastically in the simulation but is
fixed by construction, whereas the remaining components are drawn from
the conditional law derived above.
\section{Semiparametric Approach}\label{rst}

Problem~\ref{opt} can be addressed without imposing any explicit
assumption on the form of the density \(f_{r}\). Direct estimation of
\(f_{r}\) is, however, far from straightforward: on the one hand, a
parametric approach risks imposing overly restrictive constraints and
underestimating the weight of the tails; on the other, a multivariate
nonparametric estimator is subject to the \emph{curse of
dimensionality}, with numerical instability and interpolations that
fail to reflect the true variability of the data.

To overcome these limitations, we adopt the \emph{empirical likelihood}
methodology of Owen~\cite{owenn}, which allows the likelihood function to be
constructed directly from the observed data, without any parametric
specification of \(f_{r}\). We shall show that, under suitable
conditions on the optimisation domain discussed in Appendix~A, the
solution \(\bar r^{\,s}\) admits a closed form and can be interpreted
as the empirical mean of a properly selected sub-sample.

Once \(\bar r^{\,s}\) has been determined, perturbed historical series
are generated by sampling: the original returns within the stress
window are replaced by realisations drawn around \(\bar r^{\,s}\), so
that the stress event is embedded in the resulting trajectories. For
the sampling step under stress we propose two parametric alternatives:
one based on the multivariate Gaussian density (Section~\ref{gauss})
and one on the multivariate Student-\emph{t} density
(Section~\ref{tstud}), which offer explicit control, respectively, of
the correlation structure and of the local tail thickness. The
resulting framework is therefore \emph{semiparametric}: the estimation
of the centre \(\bar r^{\,s}\) is nonparametric, while the local
generation of returns relies on a parametric model calibrated in a
neighbourhood of \(\bar r^{\,s}\). For completeness, a fully
nonparametric sampling scheme is also discussed in
Section~\ref{nonpar-sampling}.

Consider a market composed of \(d\) asset classes, the first of which
is subject to an exogenous shock of magnitude \(s\in\mathbb{R}\).
Denote by
\[
\hat r_{t}
\;=\;
(\hat r_{1,t},\dots,\hat r_{d,t})^{\top}\in\mathbb{R}^{d},
\qquad t=1,\dots,T,
\]
the realisation of the daily return vector at time \(t\), and by \(T\)
the size of the available historical sample. As in the previous
section, the subscript \(-1\) denotes the \(d-1\) returns associated
with the asset classes that are not subject to the shock.

In order to study the behaviour of the market in proximity of the
stress event, we restrict the historical sample to those dates at
which the return of the first asset class is consistent with the target
level \(s\). More precisely, given a buffer \(\varepsilon>0\), we
define the set of conditioned time indices
\[
\mathcal{I}^{s}
\;:=\;
\bigl\{\,t\in\{1,\dots,T\}\;:\;\lvert \hat r_{1,t}-s\rvert\le \varepsilon\,\bigr\},
\qquad
T^{s}\;:=\;\lvert\mathcal{I}^{s}\rvert,
\]
with \(T^{s}<T\). Without loss of generality, we re-index the elements
of \(\mathcal{I}^{s}\) as \(t=1,\dots,T^{s}\), so that the conditioned
sub-sample is denoted by
\[
\bigl\{\hat r_{t}\bigr\}_{t=1}^{T^{s}}\subset\mathbb{R}^{d},
\]
where, by construction, \(\hat r_{1,t}\approx s\) for every
\(t\in\{1,\dots,T^{s}\}\). This preprocessing step provides a practical device for recasting a conditional-probability optimisation problem as an unconditional one.

Consistently with the nonparametric approach, we assume that the
solution of problem~\ref{opt} belongs to the convex hull of the
conditioned observations, that is, that there exists a weight vector
\(w=(w_{1},\dots,w_{T^{s}})\) in the simplex
\[
\Delta_{T^{s}}
\;:=\;
\Bigl\{\,w\in\mathbb{R}^{T^{s}}\;:\;w_{t}\ge0,\;\;\sum_{t=1}^{T^{s}}w_{t}=1\,\Bigr\}
\]
such that
\[
\bar r^{\,s}
\;=\;
\sum_{t=1}^{T^{s}} w_{t}\,\hat r_{t}.
\]
Further considerations on the convex-hull assumption are deferred to
Appendix~A.

Given the conditioned sub-sample
\(\{\hat r_{t}\}_{t=1}^{T^{s}}\subset\mathbb{R}^{d}\) and a candidate
vector \(x\) in its convex hull, we wish to determine
\begin{equation}\label{opt2_rec}
    \max_{w\in\Delta_{T^{s}}}\;L(w_{1},\dots,w_{T^{s}})
    \quad\text{subject to}\quad
    \sum_{t=1}^{T^{s}}w_{t}\,\hat r_{t}=x,
\end{equation}
where \(L(w)=\prod_{t=1}^{T^{s}}w_{t}\) is the empirical likelihood
associated with the weight vector \(w\), and \(\Delta_{T^{s}}\) is the
standard simplex in \(\mathbb{R}^{T^{s}}\). The estimator
\(\bar r^{\,s}\) is then defined as the value of \(x\) maximising the
\emph{profile empirical likelihood}
\[
\ell(x)
\;:=\;
\max_{w\in\Delta_{T^{s}}}\;\bigl\{\,L(w)\;:\;{\textstyle\sum_{t}}w_{t}\hat r_{t}=x\,\bigr\},
\]
namely
\begin{equation}\label{double_max}
\bar r^{\,s}
\;\in\;
\arg\max_{x}\;\ell(x).
\end{equation}
The function \(\ell(\cdot)\) assigns to every candidate \(x\) the
largest empirical likelihood attainable by any convex combination of
the observed returns whose centroid is \(x\). It can be read as a
plausibility surface on the convex hull: candidates that can be
represented only by highly unbalanced combinations --- with one or a
few observations carrying most of the weight --- inherit a low
empirical likelihood, whereas candidates that admit nearly uniform
representations are deemed more plausible. In the standard
empirical-likelihood terminology~\cite{owenn}, the nuisance variables
\(w\) are eliminated by an inner maximisation step, leaving a
finite-dimensional surface \(\ell(x)\) on which to perform the outer
search.

\begin{proposition}\label{prop:empmean}
Let \(\{\hat r_{t}\}_{t=1}^{T^{s}}\subset\mathbb{R}^{d}\) be the
conditioned sub-sample, and let \(\bar r^{\,s}\) be defined as
in~\ref{double_max}. Then the optimal weight vector
solving~\ref{opt2_rec} at \(x=\bar r^{\,s}\) is the uniform vector
\[
\bar w_{t}\;=\;\frac{1}{T^{s}},
\qquad t=1,\dots,T^{s},
\]
and the corresponding estimator admits the closed form
\[
\bar r^{\,s}
\;=\;
\frac{1}{T^{s}}\sum_{t=1}^{T^{s}}\hat r_{t}.
\]
Moreover, the maximiser is unique.
\end{proposition}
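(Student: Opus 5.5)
The plan is to collapse the double maximisation~\ref{double_max} into a single maximisation over the simplex. For every feasible $x$ we have $\ell(x)\le\sup_{w\in\Delta_{T^{s}}}L(w)$. Conversely, every $w\in\Delta_{T^{s}}$ is feasible for the inner problem~\ref{opt2_rec} at the point $x=\sum_t w_t\hat r_t$. Hence
\[
\max_{x}\ell(x)\;=\;\max_{w\in\Delta_{T^{s}}}\prod_{t=1}^{T^{s}}w_t ,
\]
and $x$ ranges over the convex hull, which is exactly the domain fixed by the convex-hull assumption. A pair $(x^\star,w^\star)$ attains the outer maximum if and only if $w^\star$ maximises $L$ on the whole simplex and $x^\star=\sum_t w^\star_t\hat r_t$. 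The proof therefore reduces to identifying the unconstrained maximiser of $L$ on $\Delta_{T^{s}}$.

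For that step I would use the AM--GM inequality. For $w\in\Delta_{T^{s}}$,
\[
\Bigl(\prod_t w_t\Bigr)^{1/T^{s}}\;\le\;\frac1{T^{s}}\sum_t w_t\;=\;\frac1{T^{s}},
\]
with equality if and only if all $w_t$ are equal, i.e.\ $w_t=1/T^{s}$. Points on the boundary of the simplex give $L=0$, so they are never optimal. Alternatively, one can maximise the strictly concave function $\sum_t\log w_t$ on the open simplex with a single Lagrange multiplier. This yields $1/w_t=\lambda$ for all $t$, hence $w_t=1/T^{s}$, and strict concavity gives uniqueness.

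Next I would transfer the conclusion back to $x$. Set $\bar x=\frac1{T^{s}}\sum_t\hat r_t$. Then $\ell(\bar x)\ge L(\bar w)=(T^{s})^{-T^{s}}$, which is the global upper bound, so $\bar x$ is a maximiser of $\ell$ and $\bar r^{\,s}=\bar x$. At this point the inner problem attains its maximum at $\bar w$. Uniqueness of that inner solution among all $w$ with centroid $\bar x$ follows because $\bar w$ is the unique global maximiser of $L$.

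The main obstacle is uniqueness of the outer maximiser $x$, since distinct $x$ correspond to distinct constraint sets. Suppose $\ell(x')=(T^{s})^{-T^{s}}$. The inner maximum at $x'$ is attained, because the feasible set is compact and $L$ is continuous. So there is a feasible $w'$ with $L(w')=(T^{s})^{-T^{s}}$. By the equality case of AM--GM, $w'=\bar w$, and therefore $x'=\sum_t\bar w_t\hat r_t=\bar x$. I would also remark that the first coordinate of $\bar r^{\,s}$ is only approximately $s$, within the buffer $\varepsilon$, consistent with the construction of $\mathcal{I}^{s}$.
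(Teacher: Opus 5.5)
Your proposal is correct and follows essentially the paper's route: you collapse the nested maximisation into an unconstrained maximisation of $L$ over $\Delta_{T^{s}}$ and identify the uniform weights as its unique maximiser. The differences are minor. You use the equality case of AM--GM where the paper uses strict concavity of $\sum_t\log w_t$ with a Lagrange multiplier. You also make the uniqueness of the outer maximiser $x$ explicit, via attainment of the inner maximum, whereas the paper leaves this implicit in its equivalence claim.
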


\begin{proof}
The proof proceeds in three steps.

\medskip
\noindent\textit{Reduction of the two-level problem.}
The estimator \(\bar r^{\,s}\) is, by construction, the solution of a
nested optimisation: an outer maximisation over the candidate \(x\)
and an inner maximisation profiling out the weights compatible with
\(x\). Since the moment constraint
\(\sum_{t}w_{t}\hat r_{t}=x\) is the only coupling between the two
levels, the joint problem
\[
\max_{\substack{x\in\mathbb{R}^{d}\\[1pt] w\in\Delta_{T^{s}}}}
\;L(w)
\qquad\text{s.t.}\qquad
\sum_{t=1}^{T^{s}}w_{t}\,\hat r_{t}=x
\]
is equivalent to~\ref{double_max}, with the constraint now acting as
the \emph{definition} of \(x\) in terms of \(w\) rather than as a
restriction on the feasible set. Eliminating \(x\) by direct
substitution reduces the entire problem to an unconstrained
empirical-likelihood maximisation on the simplex,
\begin{equation}\label{el_inner_p}
\bar w
\;\in\;
\arg\max_{w\in\Delta_{T^{s}}}\;L(w),
\qquad
\bar r^{\,s}
\;=\;
\sum_{t=1}^{T^{s}}\bar w_{t}\,\hat r_{t}.
\end{equation}

\medskip
\noindent\textit{Existence and uniqueness of the maximiser.}
Since the logarithm is strictly increasing, the maximisation of
\(L(w)\) is equivalent to that of \(\log L(w)=\sum_{t}\log w_{t}\).
The function \(\sum_{t}\log w_{t}\) is strictly concave on the
relative interior of \(\Delta_{T^{s}}\) and diverges to \(-\infty\)
on the relative boundary, where at least one weight vanishes. By
strict concavity, any stationary point in the interior is the unique
global maximiser, and no boundary point can be optimal. In
particular, the non-negativity constraints \(w_{t}\ge0\) are
automatically inactive, and the only effective constraint is the
unit-sum condition \(\sum_{t}w_{t}=1\).

\medskip
\noindent\textit{Explicit computation via Lagrange
multipliers.}
Introduce the multiplier \(\lambda\in\mathbb{R}\) associated with the
unit-sum constraint, and define the Lagrangian
\[
\mathcal{L}(w,\lambda)
\;=\;
\sum_{t=1}^{T^{s}}\log w_{t}
\;+\;
\lambda\!\left(1-\sum_{t=1}^{T^{s}}w_{t}\right).
\]
The first-order conditions read
\[
\frac{\partial \mathcal{L}}{\partial w_{t}}
\;=\;
\frac{1}{w_{t}}-\lambda
\;=\;0
\qquad\Longleftrightarrow\qquad
w_{t}\;=\;\frac{1}{\lambda},
\qquad t=1,\dots,T^{s},
\]
and are symmetric in the index \(t\); they therefore imply
\(w_{1}=w_{2}=\cdots=w_{T^{s}}\). Imposing the unit-sum constraint
yields
\[
\sum_{t=1}^{T^{s}}\frac{1}{\lambda}\;=\;1
\qquad\Longleftrightarrow\qquad
\lambda\;=\;T^{s},
\]
so that
\[
\bar w_{t}\;=\;\frac{1}{T^{s}},
\qquad t=1,\dots,T^{s}.
\]
By Step~2 this is the unique maximiser of~\ref{el_inner_p}, and
substitution into the definition of \(\bar r^{\,s}\) gives
\[
\bar r^{\,s}
\;=\;
\sum_{t=1}^{T^{s}}\bar w_{t}\,\hat r_{t}
\;=\;
\frac{1}{T^{s}}\sum_{t=1}^{T^{s}}\hat r_{t},
\]
which is the desired closed form.
\end{proof}

The candidate \(x\) corresponding to the unconstrained optimum is the
empirical mean itself, and no alternative weight vector on the simplex
can yield a higher empirical likelihood. We finally observe that this
result is in line with the maximum-entropy principle: in the absence
of further information about the distribution, the uniform weights
constitute the \emph{least informative} choice compatible with the
constraints imposed by the sample.

\subsection{Parametric Sampling}

Once the estimate \(\bar r^{\,s}\) has been obtained, we proceed by
sampling returns in a neighbourhood of \(\bar r^{\,s}\), favouring a
parametric approach that assigns a specific law to the returns under
stress; a fully nonparametric scheme remains nonetheless available
(cf.\ Section~\ref{nonpar-sampling}). The generated samples define
plausible trajectories of the \(d\) asset classes under stress, while
respecting the historical correlation structure.

Formally, given the historical sample
\[
\{\hat r_{1},\dots,\hat r_{T}\}\subset\mathbb{R}^{d},
\]
the objective is to obtain a prescribed number \(\tau\in\mathbb{N}\) of
realisations in stressed conditions
\(\{\hat r^{\,s}_{1},\dots,\hat r^{\,s}_{\tau}\}\subset\mathbb{R}^{d}\)
such that
\[
\mathbb{E}\bigl[\hat r^{\,s}_{k}\bigr]=\bar r^{\,s},
\qquad
\rho\bigl[\hat r^{\,s}_{k}\bigr]=\rho\bigl[r\bigr],
\qquad k=1,\dots,\tau,
\]
where \(\rho[X]\) denotes the correlation matrix of the random vector
\(X\). The two specifications proposed below (Sections~\ref{gauss}
and~\ref{tstud}) differ in the choice of the parametric law for
\(\hat r^{\,s}_{k}\); both are centred at \(\bar r^{\,s}\) and
calibrated using the local covariance estimate \(\hat\Sigma^{\,s}\)
computed on the conditioned sub-sample.

\subsubsection{Gaussian Sampling}\label{gauss}

We assume that, locally in a neighbourhood of \(\bar r^{\,s}\), the
returns can be approximated by a multivariate normal distribution; the
overall model is therefore semiparametric. Note that Gaussianity is invoked only as a \emph{local} model around the estimate
\(\bar r^{\,s}\), not as a global assumption on the distribution of
returns. More precisely, in such a neighbourhood the density is
represented as
\[
\mathcal{N}_d\!\bigl(\bar r^{\,s},\,\hat\Sigma^{\,s}\bigr),
\]
where \(\hat\Sigma^{\,s}\) is the local estimate of the covariance
matrix, assumed symmetric and positive definite. This assumption makes
it possible to enrich the historical series with plausible --- and
possibly unobserved --- realisations, and is particularly useful in the
presence of short time windows, when the number of factors exceeds the
number of available observations, or whenever the empirical symmetry
of the fluctuations around the stress scenario is to be preserved.

Formally, one generates a sequence of independent and identically
distributed vectors
\[
\hat r^{\,s}_{k}\;\sim\;\mathcal{N}_d\!\bigl(\bar r^{\,s},\,\hat\Sigma^{\,s}\bigr),
\qquad k=1,\dots,\tau,
\]
which yields as output the time series of the \(d\) asset classes
under stress.

\subsubsection{Student-\emph{t} Sampling}\label{tstud}

In order to capture, locally, heavier tails than the Gaussian case
would allow, we approximate the local density of returns by a
multivariate Student-\emph{t} distribution centred at \(\bar r^{\,s}\).
As in the Gaussian case, the assumption is purely local: no global
Student-\emph{t} structure is imposed, but only a representation valid
in a neighbourhood of \(\bar r^{\,s}\). This choice parametrises the
tail thickness through the degrees of freedom \(\nu>2\), while
preserving the dependence structure encoded by the scale matrix
\(\hat\Sigma^{\,s}\).

Sampling from a multivariate Student-\emph{t} distribution can be
implemented through the standard \emph{scale-mixture} representation
combining a Gaussian and a chi-squared distribution. Formally, for
\(k=1,\dots,\tau\) one independently generates
\[
q_{k}\;\sim\;\mathcal{N}_d\bigl(0,\,\hat\Sigma^{\,s}\bigr),
\qquad
p_{k}\;\sim\;\chi^{2}_{\nu},
\]
and sets
\[
\hat r^{\,s}_{k}
\;=\;
\bar r^{\,s}\;+\;q_{k}\,\sqrt{\frac{\nu}{p_{k}}}
\;\sim\;
t^{\,d}_{\nu}\bigl(\bar r^{\,s},\,\hat\Sigma^{\,s}\bigr).
\]
By construction \(\mathbb{E}[\hat r^{\,s}_{k}]=\bar r^{\,s}\) (provided
\(\nu>1\)), whereas the covariance matrix of the samples is
\[
\mathrm{Cov}\bigl[\hat r^{\,s}_{k}\bigr]
\;=\;
\frac{\nu}{\nu-2}\,\hat\Sigma^{\,s},
\qquad \nu>2;
\]
it follows that \(\hat\Sigma^{\,s}\) is to be interpreted as a scale
matrix rather than as the covariance matrix of the samples themselves.
If one wishes the samples to exhibit a covariance exactly equal to a
prescribed target matrix \(\hat\Sigma^{\,s}\), it is sufficient to
rescale the scale matrix as
\(\tfrac{\nu-2}{\nu}\hat\Sigma^{\,s}\). The output is a time series
\((\hat r^{\,s}_{1},\dots,\hat r^{\,s}_{\tau})\in\mathbb{R}^{d\times\tau}\)
that preserves the correlation structure across factors while
displaying more pronounced tails than the Gaussian scheme of
Section~\ref{gauss}.
\section{Nonparametric Approach}\label{nonpar-sampling}
The nonparametric RST procedure admits a fully data-driven
sampling scheme in which no parametric assumption is made at any stage. Having
obtained the estimate \(\bar r^{\,s}\) as in Section~\ref{rst}, the
underlying idea is to draw observations \emph{with replacement} from
the conditioned sub-sample, with selection probabilities concentrated
in a neighbourhood of \(\bar r^{\,s}\).

In order to define such a neighbourhood without committing to any
distributional form, we measure proximity through the \emph{Mahalanobis
distance} associated with the local covariance estimate
\(\hat\Sigma^{\,s}\). The choice is natural in this context:
the Mahalanobis distance standardises distances along the principal
directions of \(\hat\Sigma^{\,s}\) and is invariant under affine
reparametrisations of the return vector, so that the notion of locality
does not depend on the scale of the individual asset classes. Formally,
for any \(x,y\in\mathbb{R}^{d}\) we set
\[
d(x,y)
\;:=\;
\sqrt{(x-y)^{\top}\bigl(\hat\Sigma^{\,s}\bigr)^{-1}(x-y)},
\]
where \(\hat\Sigma^{\,s}\) is assumed symmetric and positive definite.

The selection mechanism is governed by a single hyperparameter, the
\emph{Mahalanobis radius} \(\varepsilon>0\), which controls the degree
of locality of the sampling. Let
\[
B(\bar r^{\,s},\varepsilon)
\;:=\;
\bigl\{\,x\in\mathbb{R}^{d}\;:\;d(x,\bar r^{\,s})\le\varepsilon\,\bigr\}
\]
denote the Mahalanobis ball of radius \(\varepsilon\) centred at the
estimate \(\bar r^{\,s}\). Within the conditioned sub-sample
\(\{\hat r_{t}\}_{t=1}^{T^{s}}\), we retain those observations that
fall inside the ball,
\[
\mathcal{T}^{s}_{\varepsilon}
\;:=\;
\bigl\{\,t\in\{1,\dots,T^{s}\}\;:\;\hat r_{t}\in B(\bar r^{\,s},\varepsilon)\,\bigr\},
\]
which we assume to be non-empty (otherwise the radius \(\varepsilon\)
needs to be enlarged). The radius \(\varepsilon\) governs the
classical bias--variance trade-off: small values concentrate the
sampling on observations very close to \(\bar r^{\,s}\) at the price
of a reduced effective sample size, whereas large values increase the
effective sample size at the price of admitting observations farther
from the conditioned regime.

Within \(\mathcal{T}^{s}_{\varepsilon}\), observations are weighted
inversely with their Mahalanobis distance from \(\bar r^{\,s}\):
denoting \(d_{t}:=d(\hat r_{t},\bar r^{\,s})\), we define the
selection probabilities
\[
\pi_{t}
\;:=\;
\frac{d_{t}^{\,-1}}{\displaystyle\sum_{j\in\mathcal{T}^{s}_{\varepsilon}} d_{j}^{\,-1}},
\qquad
t\in\mathcal{T}^{s}_{\varepsilon},
\]
so that observations closer to \(\bar r^{\,s}\) receive higher
probability mass. The vector
\((\pi_{t})_{t\in\mathcal{T}^{s}_{\varepsilon}}\) defines a discrete
probability distribution on the local sub-sample and can be regarded as
a hyperbolic-kernel reweighting of the empirical distribution, with
bandwidth controlled by \(\varepsilon\).

Finally, we draw \(\tau\) vectors with replacement from
\(\{\hat r_{t}\}_{t\in\mathcal{T}^{s}_{\varepsilon}}\) according to the
distribution \((\pi_{t})\):
\[
\hat r^{\,s}_{k}
\;\stackrel{\text{i.i.d.}}{\sim}\;
\sum_{t\in\mathcal{T}^{s}_{\varepsilon}}\pi_{t}\,\delta_{\hat r_{t}},
\qquad k=1,\dots,\tau,
\]
where \(\delta_{x}\) denotes the Dirac mass at \(x\). The output is the
matrix
\((\hat r^{\,s}_{1},\dots,\hat r^{\,s}_{\tau})\in\mathbb{R}^{d\times\tau}\),
whose columns constitute the simulated time series of the \(d\) asset
classes under stress.

Unlike the parametric schemes of Sections~\ref{gauss} and~\ref{tstud},
the present procedure involves no covariance estimate in the generation
of the samples themselves: the simulated vectors are drawn directly
from historical observations and therefore inherit, by construction,
the joint dependence structure of the local sub-sample. The expected
value of \(\hat r^{\,s}_{k}\) under the sampling distribution equals
the weighted mean
\(\sum_{t\in\mathcal{T}^{s}_{\varepsilon}}\pi_{t}\,\hat r_{t}\),
which approximates --- though does not generally coincide with --- the
estimate \(\bar r^{\,s}\); the approximation tightens as
\(\varepsilon\) decreases and the local sub-sample concentrates around
the conditioned mean.
\section{Results}\label{ch:applications}

In this section we present several application contexts for the
procedures developed in the previous sections. Section~\ref{test}
proposes a generic illustrative experiment, designed to familiarise the
reader with the operational aspects of the method, while
Section~\ref{test_comp} compares the alternative modelling choices
introduced in Sections~\ref{para} and~\ref{rst}.

The convention adopted throughout this work is to express prices in
terms of relative variations and interest rates in terms of absolute
first differences:
\[
\Delta i_{t} = i_{t}-i_{t-1},
\qquad
\Delta p_{t} = \frac{p_{t}}{p_{t-1}}-1,
\]
where \(i_{t}\) denotes a rate and \(p_{t}\) a price. 

The semiparametric and nonparametric variants of the procedure may
overfit the idiosyncratic noise contained in historical stress
episodes. Such events --- typically triggered by news flow or shifts
in market sentiment --- generate isolated and nearly independent
shocks, which can distort the estimation of aggregate market
dynamics, and in particular of the cross-asset dependence structure.
To mitigate this phenomenon one may resort to a time-block
aggregation, in which the conditioning level \(s\) refers to a
longer horizon rather than to a single trading day. For instance, on
a weekly basis (five trading days), a cumulative shock \(S\) prescribed
over a window of \(T_{\text{win}}\) trading days can be expressed in
weekly-equivalent terms as
\[
s = (1+S)^{5/T_{\text{win}}}-1.
\]
The aggregation horizon must be kept consistent across the entire
analysis: all returns being compared --- historical and simulated ---
must be referred to the same temporal base, otherwise the conditioning
step loses interpretability.

\subsection{Validation of the Procedure}\label{test}

We now propose an experiment in support of the RST procedure, applied
to a set of \(1000\) simulations on five time series
spanning five years of market data (\(T=5\times 252\) trading days).
The selected time series correspond to indices of representative asset
classes, as reported in Table~\ref{tab:ticker}.

\begin{table}[H]
\centering
\begin{tabular}{ll}
\hline
\textbf{Time Series}      & \textbf{Ticker}  \\
\hline
European equity                          & M7EU Index    \\
US equity                                & GDDLNA Index  \\
European government fixed income         & EG00 Index    \\
Emerging-market government fixed income  & EMGB Index    \\
1Y Italian government yield             & GBOTG12M Index\\
\hline
\end{tabular}
\caption{Indices selected for the validation of the methodology.}
\label{tab:ticker}
\end{table}

The results that follow support the capability of the model to
estimate, jointly and coherently, the impact of a stress event on
multiple asset classes, producing a unified scenario for both returns
and interest rates.

The first scenario consists of a cumulative shock of \(-11\%\) on the
European government bond index, distributed over a window of \(58\)
trading days. The magnitude and the horizon of the shock are
calibrated to mirror the conditions observed between August and
October~2022. The fluctuation is, by construction, absorbed exactly by
the shocked series, while the remaining asset classes respond
endogenously through the estimated dependence structure, as summarised
by the empirical quantiles in Table~\ref{tab:shock_performance}.

\begin{table}[H]
  \centering
  \begin{tabular}{llrrrrr}
    \toprule
    Asset    & Scenario & Q5\%  & Q25\% & Q50\% & Q75\% & Q95\% \\
    \midrule
    Gov EU   & Baseline &  -3\% &   0\% &   1\% &   2\% &   4\% \\
             & Shocked  & -19\% & -14\% & -11\% &  -8\% &  -3\% \\
    \midrule
    Gov EM   & Baseline &  -6\% &   0\% &   2\% &   3\% &   6\% \\
             & Shocked  & -12\% &  -4\% &   0\% &   5\% &  11\% \\
    \midrule
    Eq EU    & Baseline & -15\% &  -3\% &   2\% &   6\% &  13\% \\
             & Shocked  & -22\% &  -2\% &  12\% &  26\% &  49\% \\
    \midrule
    Eq USA   & Baseline & -13\% &  -3\% &   3\% &   7\% &  14\% \\
             & Shocked  & -34\% &  -7\% &  11\% &  30\% &  62\% \\
    \bottomrule
  \end{tabular}
  \caption{Empirical quantiles of the cumulative percentage variations
  over the period subject to a shock of \(-11\%\) on the European
  government bond index, under the baseline and the stressed scenario.}
  \label{tab:shock_performance}
\end{table}

Two features of Table~\ref{tab:shock_performance} deserve emphasis.
First, the median response of the equity indices is markedly upward,
consistently with the historical correlation pattern between European
fixed-income and equity returns observed in the calibration sample.
Second, the \(5\%\) quantile under the stressed scenario remains more
negative than the corresponding baseline quantile, and the
inter-quantile range expands appreciably; the model thus captures the
standard risk--reward asymmetry of stress regimes, in which a stronger
central tendency is accompanied by a generalised increase in the
dispersion of outcomes and, consequently, in tail risk.

The emerging-market bond index displays a more contained downward
correction, consistent with its partial decoupling from European
fixed-income dynamics. Its conditional distribution, however, exhibits
a noticeably larger variability than under the baseline scenario,
coherently with the expected risk-on/risk-off behaviour in stressed
market regimes.

\begin{figure}[H]
  \centering
  \includegraphics[width=1\textwidth]{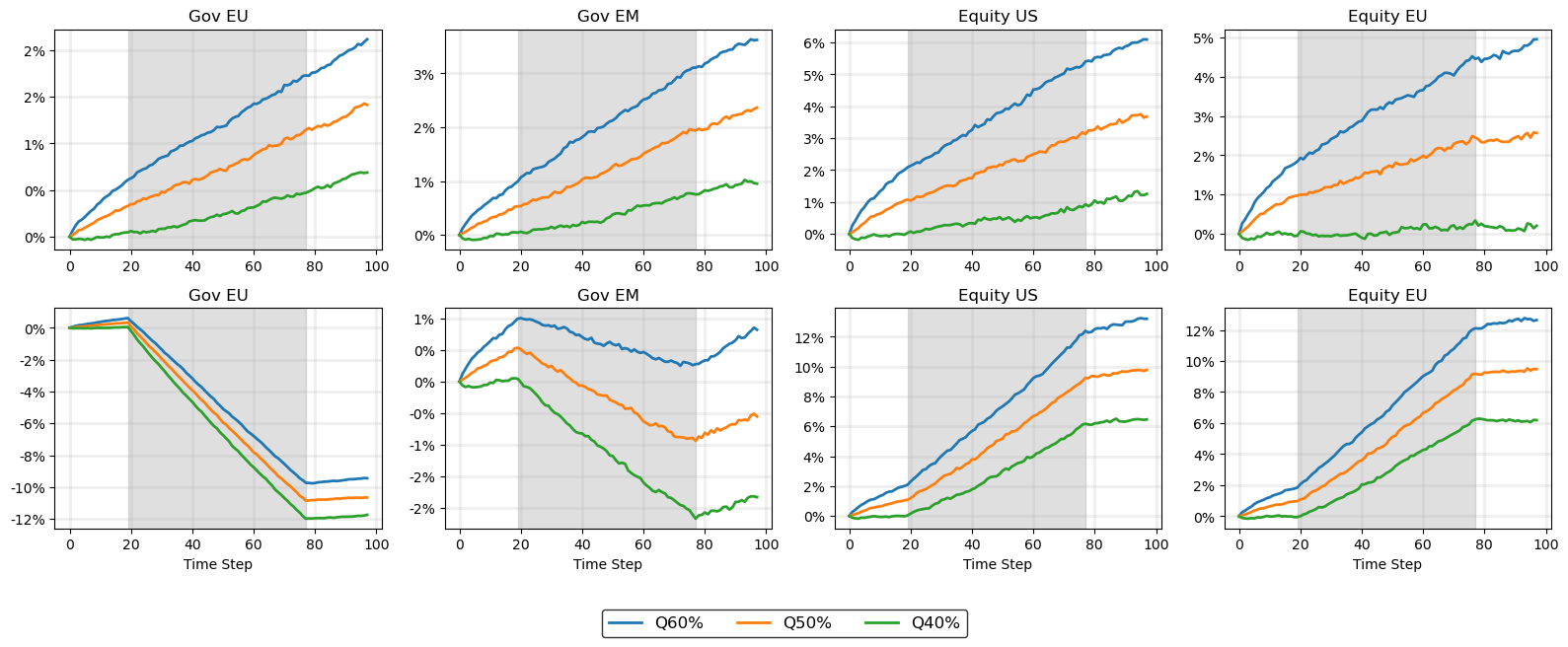}
  \caption{Baseline scenario (top row) and stress scenario (bottom row)
  in the case of a \(-11\%\) shock on the European government bond
  index. The shaded band identifies the time interval subject to the
  shock.}
  \label{fig:rst}
\end{figure}

Figure~\ref{fig:rst} reports the graphical output of the procedure.
The first column corresponds to the asset class subject to the
imposed shock and is characterised by visibly reduced variability ---
as expected, since its trajectory is constrained by construction to
the target level. The shaded band in each panel marks the stress
window: outside it, the simulated trajectories coincide with the
historical ones, whereas inside it the procedure replaces the original
returns with realisations coherent with the imposed shock.

\subsection{Comparison Across Approaches}\label{test_comp}

We now compare the parametric, semiparametric and nonparametric
variants of the procedure on a common experimental setup. For the sake
of legibility, the analysis is restricted to three asset classes: the
European government bond index, the European equity index and the
one-year Italian government yield. A cumulative shock of \(-30\%\) is
imposed on the equity index, distributed over a window of \(63\)
trading days (approximately three months).

\paragraph{Parametric approach.}
Figure~\ref{fig:para} reports the simulated trajectories obtained
under the fully Gaussian parametric variant of Section~\ref{para}. The
sustained drop in the equity index is accompanied by a positive
response of the government bond index, which records a median
cumulative total return of \(+2\%\) at the end of the stress window,
while the one-year Italian government yield remains substantially
unchanged.

\begin{figure}[H]
  \centering
  \includegraphics[width=1\textwidth]{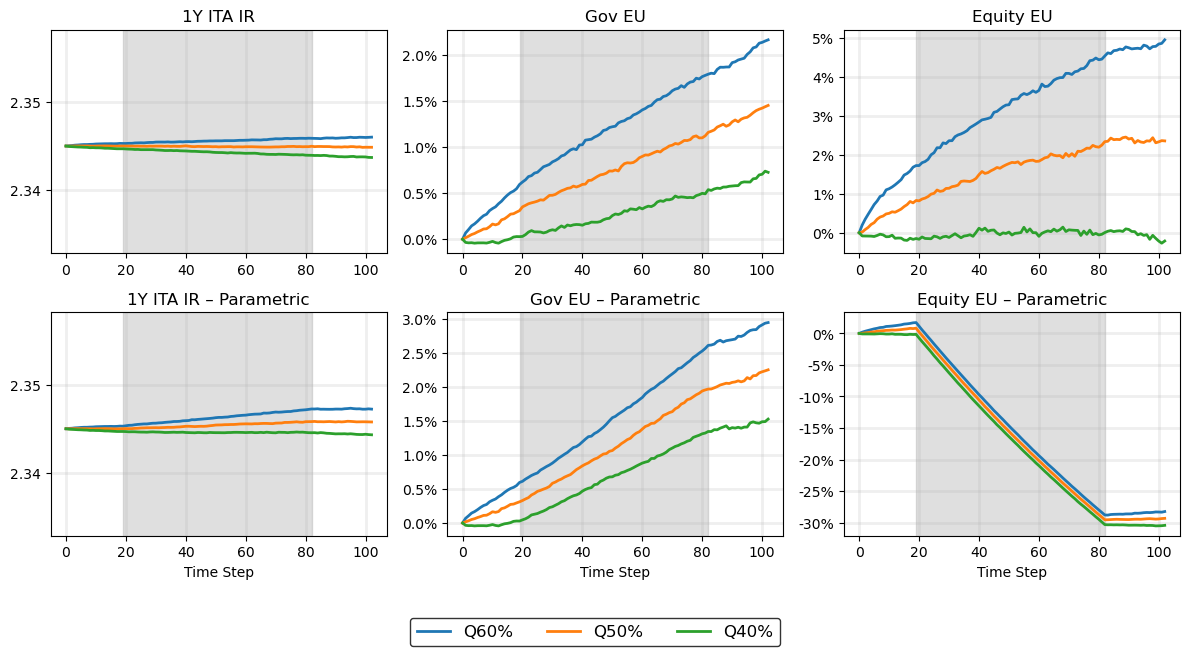}
  \caption{Baseline scenario (top row) and stress scenario (bottom row)
  generated by the parametric RST method, under a \(-30\%\) shock on
  the European equity index. The shaded band identifies the time
  interval subject to the shock.}
  \label{fig:para}
\end{figure}

\paragraph{Semiparametric.}
The semiparametric variants remain close to the Gaussian benchmark.
Figures~\ref{fig:semipara1} and~\ref{fig:semipara2} report, respectively,
the Gaussian and the Student-\emph{t} semiparametric schemes. The fully parametric
estimator delivers trajectories that are visibly less affected by
sampling noise, a direct consequence of the closed-form expression of
the underlying density; the dispersion induced by the shock is
correspondingly lower than in the semiparametric counterparts. The
Student-\emph{t} variant produces a fan of trajectories whose central
tendency is comparable to the Gaussian one, while individual paths
exhibit heavier tails --- a direct consequence of the parametric
flexibility introduced by the degrees of freedom \(\nu\).

\begin{figure}[H]
  \centering
  \includegraphics[width=1\textwidth]{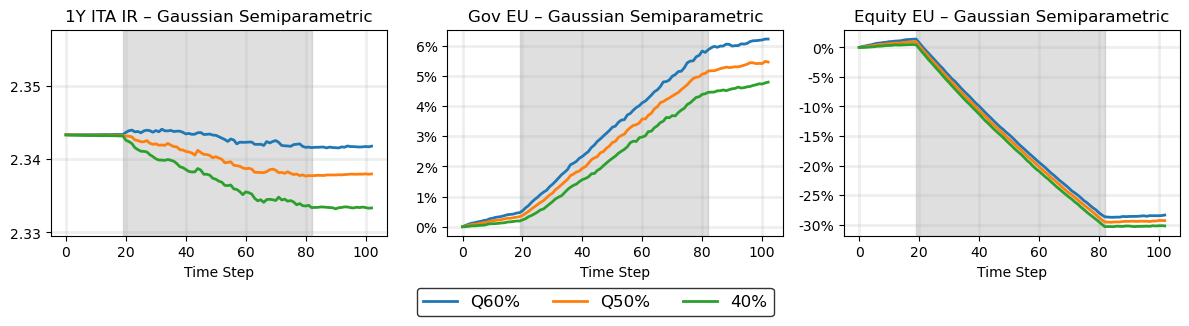}
  \caption{Stress scenario generated by the semiparametric Gaussian
  RST method, under a \(-30\%\) shock on the European equity index.
  The shaded band identifies the time interval subject to the shock.}
  \label{fig:semipara1}
\end{figure}

\begin{figure}[H]
  \centering
  \includegraphics[width=1\textwidth]{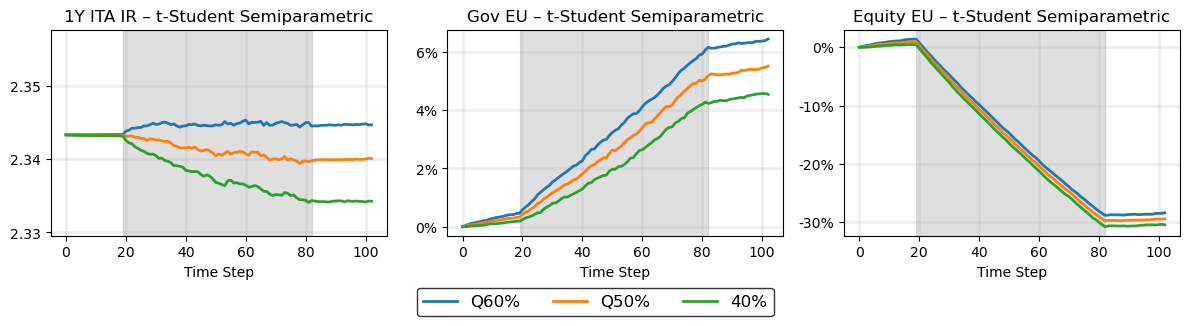}
  \caption{Stress scenario generated by the semiparametric RST method
  with Student-\emph{t} sampling, under a \(-30\%\) shock on the
  European equity index. The shaded band identifies the time interval
  subject to the shock.}
  \label{fig:semipara2}
\end{figure}

\paragraph{Nonparametric approach.}
Figure~\ref{fig:nonpa} reports the nonparametric variant, whose
output differs qualitatively from the previous schemes. The most
striking feature is a marked reduction in the simulated dispersion ---
a direct consequence of the sampling mechanism: by drawing with
replacement from a Mahalanobis neighbourhood of \(\bar r^{\,s}\), the
procedure restricts itself to a comparatively small set of historical
observations. Since events close to \(\bar r^{\,s}\) are rare by
definition, the nonparametric scheme may sample repeatedly from the
same observations, producing trajectories with reduced intra-shock
variability. This effect represents the operational counterpart of the
bias--variance trade-off discussed in Section~\ref{nonpar-sampling}:
the absence of any distributional assumption protects against
parametric misspecification at the cost of a reduced effective sample
size in the conditioned region.

\begin{figure}[H]
  \centering
  \includegraphics[width=1\textwidth]{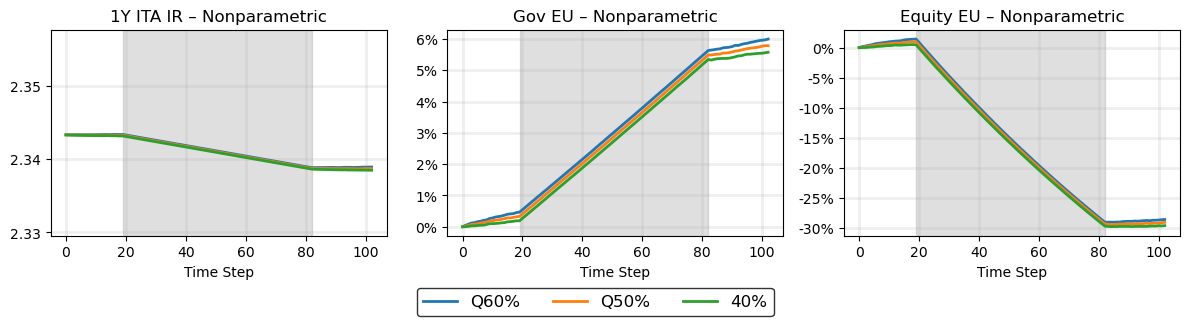}
  \caption{Stress scenario generated by the nonparametric RST method,
  under a \(-30\%\) shock on the European equity index. The shaded
  band identifies the time interval subject to the shock.}
  \label{fig:nonpa}
\end{figure}
\section{Conclusions}
We have developed an RST framework enriched by a dedicated sampling
phase, which makes it possible to generate plausible market-wide
stress scenarios from a single exogenous perturbation imposed on one
asset class. The methodology has been formulated in three variants ---
parametric, semiparametric and nonparametric --- so as to
accommodate different operational requirements while preserving a
common conditioning logic based on the empirical-likelihood reading
of the optimal scenario.

The proposed models have been tested on real market data, using
representative indices of distinct asset classes. The experimental
results support the effectiveness of the framework, both in terms of
the expected magnitude of the simulated responses and in terms of the
economic coherence of the cross-asset dynamics induced by the imposed
shock. In particular, the joint behaviour of fixed-income, equity and
interest-rate factors observed across the simulated trajectories is
consistent with stylised market regularities and with the historical
co-movement structure encoded in the calibration sample.

The empirical analysis also highlights a limitation of the fully
nonparametric variant, whose accuracy proves to be more sensitive
both to the size of the available sample and to potential asymmetries
in the empirical distribution of returns. As discussed in
Section~\ref{nonpar-sampling}, this is a structural feature of any
inverse-distance resampling scheme operating in a sparsely populated
region, and not a defect of the specific implementation. A systematic
investigation of regularisation techniques for the nonparametric
scheme --- alternative kernels, adaptive bandwidths, smoothed bootstrap
variants --- falls outside the scope of the present work; the
discussion here is therefore confined to documenting the qualitative
robustness of the semiparametric variant, which emerges as the most
balanced trade-off between distributional flexibility and stability of
the estimated stress scenarios. The parametric variant, in turn,
remains advantageous in those operational contexts in which future
dynamics can reasonably be assumed to be consistent with past
performance, since it delivers smoother trajectories and a sharper
control of the conditional variability.

Compared with the methodology proposed by the Office of Financial
Research, the approach developed in this paper privileges the
operational and practical aspects of stress-scenario generation. In
particular, the methods described in this work lend themselves
naturally to a post-processing role with respect to standard
simulation techniques such as the bootstrap: starting from a base
simulation of the historical series, the RST procedure replaces the
returns inside the stress window with realisations coherent with the
imposed shock and the estimated dependence structure, leaving the
remaining part of the trajectory unaltered. The resulting pipeline is
modular --- the simulation engine and the stress engine remain
decoupled --- and computationally efficient, since the conditioning
step exploits closed-form expressions wherever a Gaussian or
semiparametric structure is adopted.

\appendix
\section{Appendix: Theoretical Complements}\label{appendice:A}

This appendix collects two technical results that complement the
empirical-likelihood approach developed in Section~\ref{rst}.

\subsection{Bootstrap as a Special Case of the Convex-Hull Formulation}
\label{sec:bootstrap_subset}

The empirical-likelihood framework of Section~\ref{rst} searches for the
modal scenario \(\bar r^{\,s}\) within the convex hull of the
conditioned sub-sample, rather than restricting attention to the
finite set of observed returns. The following result formalises the
fact that the latter approach is strictly more conservative.

\begin{proposition}\label{prop:bootstrap_subset}
Let \(\{\hat r_{t}\}_{t=1}^{T^{s}}\subset\mathbb{R}^{d}\) denote the
conditioned sub-sample, and define
\[
U^{(\mathrm{bootstrap})}
\;:=\;
\bigl\{\,\hat r_{t}\;:\;t\in\{1,\dots,T^{s}\}\,\bigr\},
\qquad
U
\;=\;
\Bigl\{\,\sum_{t=1}^{T^{s}}w_{t}\,\hat r_{t}\;:\;w\in\Delta_{T^{s}}\,\Bigr\}.
\]
Then \(U^{(\mathrm{bootstrap})}\subset U\).
\end{proposition}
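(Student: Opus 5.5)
The inclusion will be proved by exhibiting, for each element of \(U^{(\mathrm{bootstrap})}\), an explicit weight vector in the simplex \(\Delta_{T^{s}}\) that reproduces it. Fix an arbitrary index \(t_{0}\in\{1,\dots,T^{s}\}\) and take the canonical basis vector \(w=e_{t_{0}}\), that is, \(w_{t_{0}}=1\) and \(w_{t}=0\) for \(t\neq t_{0}\).

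The first step is to check that \(e_{t_{0}}\in\Delta_{T^{s}}\). All entries are non-negative and \(\sum_{t}w_{t}=1\), so this is immediate. The second step is to compute the associated convex combination:
\[
\sum_{t=1}^{T^{s}}w_{t}\,\hat r_{t}=\hat r_{t_{0}}.
\]
This shows that \(\hat r_{t_{0}}\in U\). Since \(t_{0}\) was arbitrary, every element of \(U^{(\mathrm{bootstrap})}\) belongs to \(U\), which gives the claimed inclusion.

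There is no substantive obstacle. The only point that needs care is the role of the symbol \(\subset\). If the intended reading is \emph{strict} inclusion, matching the phrase ``strictly more conservative'', a short supplementary argument is required, and it holds under a mild non-degeneracy condition: the sub-sample must contain at least two distinct points.

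Suppose \(\hat r_{a}\neq\hat r_{b}\). Consider the midpoint
\[
m=\tfrac12\hat r_{a}+\tfrac12\hat r_{b},
\]
which lies in \(U\). I would argue that some point of the open segment \((\hat r_{a},\hat r_{b})\) is not an observation. The segment contains uncountably many points, whereas the sample has only \(T^{s}\) points, so finitely many observations cannot exhaust it. Any such point therefore lies in \(U\setminus U^{(\mathrm{bootstrap})}\).

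If all observations coincide, then \(U=U^{(\mathrm{bootstrap})}\) is a single point, and only the non-strict inclusion holds. I would state this caveat explicitly alongside the proof.
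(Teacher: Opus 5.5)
Your proof of the inclusion is the paper's proof. The canonical basis vector $e^{(t)}$ lies in $\Delta_{T^{s}}$ and reproduces $\hat r_{t}$.

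The two arguments differ on the supplementary strictness claim, and there yours is the sounder one. The paper uses the empirical mean $\frac{1}{T^{s}}\sum_{t}\hat r_{t}$ as its witness. It asserts that this mean ``coincides with no individual observation unless the sub-sample is degenerate.'' That is false as stated: take $d=1$ with $\hat r_{1}=-1$, $\hat r_{2}=0$, $\hat r_{3}=1$. The mean equals $\hat r_{2}$, yet the sample has three distinct points.

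Your argument avoids this. For $\hat r_{a}\neq\hat r_{b}$, every point $\lambda\hat r_{a}+(1-\lambda)\hat r_{b}$ with $\lambda\in(0,1)$ lies in $U$, via the weights $w_{a}=\lambda$, $w_{b}=1-\lambda$. There are uncountably many such points, and the at most $T^{s}$ observations cannot cover them. This also identifies the exact condition for strictness: at least two distinct observations, with $U=U^{(\mathrm{bootstrap})}$ otherwise.

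What the paper's route would buy, if it were valid, is a specific and meaningful witness, namely the EL-optimal scenario of the preceding section. Yours only gives existence. That is all the statement needs.

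The same counterexample shows why the midpoint $m$ should not be the witness either, since it can itself be an observation. You use $m$ only as an example of a point of $U$, which is harmless, but you can drop that sentence.
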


\begin{proof}
Let \(e^{(t)}\in\mathbb{R}^{T^{s}}\) denote the \(t\)-th vector of the
canonical basis, namely \(e^{(t)}_{j}=1\) if \(j=t\) and
\(e^{(t)}_{j}=0\) otherwise. By construction
\(\sum_{j}e^{(t)}_{j}=1\) and \(e^{(t)}_{j}\ge0\), so that
\(e^{(t)}\in\Delta_{T^{s}}\). The corresponding convex combination
satisfies
\[
\sum_{j=1}^{T^{s}}e^{(t)}_{j}\,\hat r_{j}\;=\;\hat r_{t},
\]
which proves \(\hat r_{t}\in U\) for every \(t\). Strict inclusion
follows by observing that the empirical mean
\(\frac{1}{T^{s}}\sum_{t}\hat r_{t}\) lies in \(U\) by convexity but
coincides with no individual observation \(\hat r_{t}\) unless the
sub-sample is degenerate.
\end{proof}

\begin{remark}
Proposition~\ref{prop:bootstrap_subset} provides the theoretical
ground for the convex-hull assumption used throughout
Section~\ref{rst}: the empirical-likelihood approach inherits the
consistency of the bootstrap --- the historical observations remain
admissible scenarios --- while admitting a strictly richer set of
candidate scenarios, namely all the convex combinations of the
observations falling in the conditioned region.
\end{remark}

\subsection{The Empirical-Likelihood Substitution}\label{sec:el_substitution}

Problem~\ref{opt} prescribes the maximisation of the unknown joint
density \(f_{r}\) over the convex hull of the conditioned sub-sample.
Since \(f_{r}\) is not available in closed form within the
nonparametric setting of Section~\ref{rst}, direct evaluation of
problem~\ref{opt} is infeasible.

The empirical-likelihood methodology of Owen~\cite{owenn} replaces
\(f_{r}\) by the empirical likelihood
\(L(w)=\prod_{t=1}^{T^{s}}w_{t}\) associated with the discrete
distribution \(P_{w}=\sum_{t}w_{t}\,\delta_{\hat r_{t}}\) supported on
the observed sub-sample. The corresponding \emph{profile empirical
likelihood} is defined as
\[
\ell(x)
\;:=\;
\max_{w\in\Delta_{T^{s}}}
\Bigl\{\,
\prod_{t=1}^{T^{s}}w_{t}
\;:\;
\sum_{t=1}^{T^{s}}w_{t}\,\hat r_{t}=x
\,\Bigr\},
\qquad x\in U.
\]
The substitution of \(f_{r}\) by \(\ell\) transforms problem~\ref{opt}
into problem~\ref{opt2_rec}. The following proposition establishes the
structural properties of \(\ell\) that justify this substitution.

\begin{proposition}[Structural properties of the profile EL]\label{prop:profile_EL}
The profile empirical likelihood \(\ell:U\to[0,+\infty)\) satisfies:
\begin{enumerate}
\item[(i)] for every \(x\) in the relative interior of \(U\) the
maximum defining \(\ell(x)\) is attained at a unique interior point of
\(\Delta_{T^{s}}\); in particular \(\ell(x)>0\);
\item[(ii)] \(\log\ell\) is strictly concave on the relative interior
of \(U\);
\item[(iii)] \(\ell\) admits a unique global maximiser, which coincides
with the empirical mean
\(\bar r=\frac{1}{T^{s}}\sum_{t}\hat r_{t}\)
and is attained at the uniform weight vector \(w_{t}=1/T^{s}\).
\end{enumerate}
\end{proposition}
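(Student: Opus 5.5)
The proof runs along three lines, one for each item, all resting on two facts already used in Proposition~\ref{prop:empmean}: \(w\mapsto\sum_{t}\log w_{t}\) is strictly concave on the relative interior of \(\Delta_{T^{s}}\) and equals \(-\infty\) on its relative boundary, and the feasible set \(F(x):=\{w\in\Delta_{T^{s}}:\sum_{t}w_{t}\hat r_{t}=x\}\) is a compact polytope, non-empty for every \(x\in U\). Throughout I write \(R\) for the \(d\times T^{s}\) matrix with columns \(\hat r_{t}\), so that \(F(x)=\Delta_{T^{s}}\cap\{Rw=x\}\).

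\emph{Item (i).} First I will show that every \(x\) in the relative interior of \(U\) admits a representation with strictly positive weights. Let \(x\in\operatorname{ri}U\) and \(\bar r=R(\tfrac{1}{T^{s}}\mathbf 1)\). Since \(x\) is relatively interior, I can move slightly beyond \(x\) away from \(\bar r\) and stay in \(U\): there is \(\eta>0\) with \(y:=x+\eta(x-\bar r)\in U\), say \(y=Rv\) with \(v\in\Delta_{T^{s}}\). Then
\[
x=\tfrac{1}{1+\eta}\,y+\tfrac{\eta}{1+\eta}\,\bar r
=R\Bigl(\tfrac{1}{1+\eta}\,v+\tfrac{\eta}{1+\eta}\,\tfrac{1}{T^{s}}\mathbf 1\Bigr),
\]
and the bracketed weight vector lies in \(\Delta_{T^{s}}\) with every entry at least \(\eta/((1+\eta)T^{s})>0\).

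Hence \(\sup_{F(x)}\log L>-\infty\). The function \(\log L\) is upper semicontinuous with values in \([-\infty,\infty)\) and \(F(x)\) is compact, so the maximum is attained. It cannot be attained where some \(w_{t}=0\), since there \(\log L=-\infty\). The maximiser is therefore interior and \(\ell(x)>0\). Uniqueness follows from strict concavity of \(\log L\) on the convex set \(F(x)\cap\operatorname{ri}\Delta_{T^{s}}\).

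\emph{Item (ii).} Take \(x\neq y\) in \(\operatorname{ri}U\) and \(\theta\in(0,1)\), and let \(w^{x},w^{y}\) be the optimal weights from (i). The combination \(\theta w^{x}+(1-\theta)w^{y}\) lies in \(F(\theta x+(1-\theta)y)\), so
\[
\log\ell(\theta x+(1-\theta)y)\ge\log L(\theta w^{x}+(1-\theta)w^{y}).
\]
Because \(Rw^{x}=x\neq y=Rw^{y}\), we have \(w^{x}\neq w^{y}\). Strict concavity of \(\log L\) on the interior of the simplex then gives
\[
\log L(\theta w^{x}+(1-\theta)w^{y})>\theta\log\ell(x)+(1-\theta)\log\ell(y),
\]
which is strict concavity of \(\log\ell\).

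\emph{Item (iii).} For every \(x\in U\) we have \(\ell(x)\le\max_{\Delta_{T^{s}}}L\). By Proposition~\ref{prop:empmean} this maximum equals \((T^{s})^{-T^{s}}\) and is attained only at the uniform vector. At \(x=\bar r\) the uniform vector is feasible, so \(\ell(\bar r)=(T^{s})^{-T^{s}}\) and \(\bar r\) is a global maximiser. Conversely, if \(\ell(x)=(T^{s})^{-T^{s}}\), any optimal \(w\in F(x)\) attains the unconstrained maximum of \(L\). By the uniqueness in Proposition~\ref{prop:empmean} it must be the uniform vector, so \(x=R(\tfrac{1}{T^{s}}\mathbf 1)=\bar r\). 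Note also that \(\bar r\in\operatorname{ri}U\), being a positive-weight combination, which is consistent with (i).

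\emph{Main obstacle.} I expect the delicate step to be the positive-representation argument in (i): it is the only place where the relative-interior hypothesis is used, and it is needed because the \(\hat r_{t}\) may be affinely dependent. The remaining steps are routine convexity.
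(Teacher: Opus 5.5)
Your proof is correct. The paper states this proposition without proof. Item (iii) is covered by Proposition~\ref{prop:empmean}, and your argument for it uses the same reduction as there: for every \(x\), \(\ell(x)\le\max_{\Delta_{T^{s}}}L\), and the uniform vector is feasible at \(x=\bar r\), so the nested problem collapses to the unconstrained maximisation of \(L\) on the simplex. For (i) and (ii) the paper offers only the informal remark that \(\ell\) ``inherits'' the features of a density, and your arguments supply what is missing.

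The extension \(x+\eta(x-\bar r)\in U\) is the standard characterisation of relative-interior points of a convex set. It yields a representation of \(x\) with all weights at least \(\eta/((1+\eta)T^{s})\). As you note, this is exactly what is needed when the \(\hat r_{t}\) are affinely dependent and \(F(x)\) is a genuine polytope rather than a single point. The remaining steps then go through:
compactness of \(F(x)\) and \(\log L=-\infty\) on the relative boundary of the simplex give existence and interiority;
strict concavity of \(\sum_{t}\log w_{t}\) on the convex set \(F(x)\cap\operatorname{ri}\Delta_{T^{s}}\) gives uniqueness;
the partial-maximisation argument with \(w^{x}\neq w^{y}\) gives strict concavity of \(\log\ell\).

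Two minor remarks.

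First, the fact that the uniform vector is the \emph{only} maximiser of \(L\) on \(\Delta_{T^{s}}\) comes from Steps~2--3 of the proof of Proposition~\ref{prop:empmean}, not from its statement. You can also get it in one line from AM--GM: \(\prod_{t}w_{t}\le(T^{s})^{-T^{s}}\) on the simplex, with equality iff all \(w_{t}\) coincide.

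Second, your closing observation \(\bar r\in\operatorname{ri}U\) uses the converse inclusion \(R(\operatorname{ri}\Delta_{T^{s}})\subseteq\operatorname{ri}U\), which you have not proved. It is not needed for (i)--(iii). It is, however, exactly what one would need to justify the paper's subsequent claim that \(\ell\) vanishes on the relative boundary of \(U\), e.g.\ via the identity \(\operatorname{ri}(RC)=R(\operatorname{ri}C)\) for convex \(C\) and linear \(R\).
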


By Proposition~\ref{prop:profile_EL}, the profile empirical likelihood
\(\ell\) inherits the qualitative features expected of a probability
density supported on \(U\): non-negativity on \(U\), vanishing on
the relative boundary, strict log-concavity on the interior, and a
unique mode. These properties make \(\ell\) a structurally consistent
proxy for \(f_{r}\) in the optimisation problem, and reduce the
original maximisation in problem~\ref{opt} to a finite-dimensional, strictly
concave problem with linear constraints (problem~\ref{opt2_rec}).

\section{Appendix: Local Covariance Estimation}\label{appendice:B}

The semiparametric and nonparametric procedures developed in
Sections~\ref{rst} and~\ref{nonpar-sampling} rely on a covariance
estimate that reflects the local market regime in a neighbourhood of
the conditioned scenario \(\bar r^{\,s}\), rather than the
unconditional historical dependence. The distinction is substantive
rather than incidental. Cross-asset correlations are widely documented
to be regime-dependent, and in particular to intensify during periods
of distress --- a phenomenon commonly referred to as
\emph{correlation breakdown}. A global covariance estimator computed
over the entire historical sample averages across regimes and is
therefore inadequate as an input to a stress-scenario generator, whose
purpose is precisely to characterise the dependence in the conditioned
region. This appendix collects the operational details of the local
estimator \(\hat\Sigma^{\,s}\) employed throughout the paper, together
with a Bayesian regularisation scheme based on the conjugate
inverse-Wishart prior, intended to mitigate the small-sample
instabilities that typically arise when restricting estimation to a
conditioned sub-sample.

\subsection{Mahalanobis-Based Estimation}

Let \(\{\hat r_{t}\}_{t=1}^{T}\subset\mathbb{R}^{d}\) denote the
available historical sample of daily returns, and let
\(\bar r^{\,s}\in\mathbb{R}^{d}\) denote the modal scenario --- as
obtained either from the closed-form solution of Section~\ref{para},
or from the empirical-likelihood procedure of Section~\ref{rst}. Let
\[
\hat\Sigma^{\mathrm{glob}}
\;:=\;
\frac{1}{T-1}\sum_{t=1}^{T}(\hat r_{t}-\bar r)(\hat r_{t}-\bar r)^{\top},
\qquad
\bar r \;:=\; \frac{1}{T}\sum_{t=1}^{T}\hat r_{t},
\]
denote the unconditional sample covariance, assumed symmetric and
positive definite.

The locality of the estimation is governed by the squared Mahalanobis
distance of each historical observation from the modal scenario,
computed with respect to the global covariance:
\[
d_{t}^{\,2}
\;:=\;
(\hat r_{t}-\bar r^{\,s})^{\top}\bigl(\hat\Sigma^{\mathrm{glob}}\bigr)^{-1}(\hat r_{t}-\bar r^{\,s}),
\qquad t=1,\dots,T.
\]
We retain those observations whose squared Mahalanobis distance falls
below the empirical \(5\%\) quantile of the distribution of
\(\{d_{t}^{\,2}\}\):
\[
q_{5\%}
\;:=\;
Q_{5\%}\bigl(\{d_{1}^{\,2},\dots,d_{T}^{\,2}\}\bigr),
\qquad
\mathcal{T}^{\,s}_{\mathrm{loc}}
\;:=\;
\bigl\{\,t\in\{1,\dots,T\}\;:\;d_{t}^{\,2}<q_{5\%}\,\bigr\}.
\]
The threshold level \(5\%\) is a hyperparameter that governs the
standard bias--variance trade-off: lower percentiles concentrate the
estimation on a tighter neighbourhood of \(\bar r^{\,s}\) at the cost
of a smaller effective sample size, while higher percentiles admit a
larger sub-sample at the price of including observations less
representative of the stressed regime.

The local sample covariance is defined as the standard empirical
covariance restricted to the selected sub-sample:
\[
\hat\Sigma^{\,s}_{\mathrm{loc}}
\;:=\;
\frac{1}{|\mathcal{T}^{\,s}_{\mathrm{loc}}|-1}
\sum_{t\in\mathcal{T}^{\,s}_{\mathrm{loc}}}
(\hat r_{t}-\bar r_{\mathrm{loc}})(\hat r_{t}-\bar r_{\mathrm{loc}})^{\top},
\]
where
\[
\bar r_{\mathrm{loc}}
\;:=\;
\frac{1}{|\mathcal{T}^{\,s}_{\mathrm{loc}}|}
\sum_{t\in\mathcal{T}^{\,s}_{\mathrm{loc}}}\hat r_{t}
\]
denotes the empirical mean of the local sub-sample.

In the applications of Section~\ref{para}, the same Mahalanobis-based
selection procedure can be applied to estimate only specific blocks of
the global covariance --- typically the variance of the shocked
component \(\Sigma_{11}^{\,s}\) and the cross-covariance blocks
\(\Sigma_{1,-1}^{\,s},\,\Sigma_{-1,1}^{\,s}\) --- while retaining a
global estimate for the covariance \(\Sigma_{-1,-1}\) of the
non-shocked factors. This hybrid specification preserves the
stability of the marginal estimates on the larger block while letting
the dependence between the shocked and the non-shocked components
reflect the local regime.

\subsection{Bayesian Shrinkage Regularisation}

When the conditioned sub-sample is small --- a typical occurrence in
high-dimensional or rare-event settings --- the local sample
covariance \(\hat\Sigma^{\,s}_{\mathrm{loc}}\) may be ill-conditioned or
even singular. The standard remedy is to shrink the estimate towards
a regularising target, and we follow here a Bayesian formulation based
on the conjugate inverse-Wishart prior for the covariance matrix,
which makes the role of the regularisation hyperparameter transparent.

We adopt the prior
\[
\Sigma^{\mathrm{prior}}
\;:=\;
\frac{\sigma_{s}^{2}}{252}\,I_{d},
\]
where \(I_{d}\) denotes the \(d\)-dimensional identity matrix and
\(\sigma_{s}^{2}\) is a user-supplied scalar interpretable as a target
annualised variance under stress. The factor \(1/252\) rescales the
annualised variance to a daily horizon coherently with the time scale
of the returns. The prior is uninformative \emph{across} asset classes
(its structure is invariant under rotations of the return vector) but
informative \emph{about} the scale of the volatility, and constitutes
the natural channel through which expert judgement on the stress
regime can enter the estimation.

Under the conjugate model
\[
\Sigma
\;\sim\;
\mathcal{IW}_{d}\bigl(\nu_{0}\,\Sigma^{\mathrm{prior}},\,\nu_{0}+d+1\bigr),
\qquad
\hat r_{t}\mid\Sigma
\;\stackrel{\text{i.i.d.}}{\sim}\;\mathcal{N}_{d}(\bar r^{\,s},\Sigma)
\;\;\text{for } t\in\mathcal{T}^{\,s}_{\mathrm{loc}},
\]
where the prior is parametrised so that its mean equals
\(\Sigma^{\mathrm{prior}}\), the posterior distribution is again
inverse-Wishart and admits the closed-form posterior mean
\begin{equation}\label{eq:bayes_shrinkage}
\hat\Sigma^{\,s}
\;:=\;
\frac{\nu_{0}\,\Sigma^{\mathrm{prior}}
      \;+\;n_{\mathrm{loc}}\,\hat\Sigma^{\,s}_{\mathrm{loc}}}
     {\nu_{0}+n_{\mathrm{loc}}},
\qquad
n_{\mathrm{loc}}\;:=\;|\mathcal{T}^{\,s}_{\mathrm{loc}}|.
\end{equation}
The estimator~\ref{eq:bayes_shrinkage} is a convex combination of the
prior scale and of the local sample covariance, with weights given by
the prior pseudo-sample size \(\nu_{0}\) and the effective size of the
local sub-sample \(n_{\mathrm{loc}}\).

The hyperparameter \(\nu_{0}\) admits a direct frequentist
interpretation as the number of \emph{pseudo-observations} carried by
the prior. As \(\nu_{0}\to 0\) the estimator collapses to the local
sample covariance \(\hat\Sigma^{\,s}_{\mathrm{loc}}\), recovering the
fully empirical specification; as \(\nu_{0}\to\infty\) the estimator
tends to the prior \(\Sigma^{\mathrm{prior}}\), recovering the fully
prior-driven specification. Intermediate values interpolate between
the two regimes. The choice of \(\nu_{0}\) is left to the user, and
can be calibrated either on a hold-out sample or fixed at a
conventional value (e.g.\ \(\nu_{0}=d\) or \(\nu_{0}=n_{\mathrm{loc}}\),
yielding equal weights between prior and data).

\newpage
\nocite{*}
\printbibliography[heading=bibintoc]

\end{document}